\documentclass[11pt, reqno]{article}
\usepackage[letterpaper, total={6.5in, 9in}, 
]{geometry}

\usepackage{amsmath, amsthm, amscd, amsfonts, thmtools, amssymb, graphicx, xcolor, color}
\usepackage[bookmarksnumbered, colorlinks, plainpages=false]{hyperref}
\hypersetup{colorlinks=true,citecolor=magenta,linkcolor=blue} 

\allowdisplaybreaks
\usepackage[T1]{fontenc}
\usepackage{lmodern}
\usepackage{newtxmath,newtxtext}
\usepackage[page]{appendix}
\usepackage{array}
\usepackage{colortbl}
\newcolumntype{C}{>{$}c<{$}}
\newcolumntype{G}{>{\columncolor[gray]{0.8}$}c<{$}}
\usepackage{multirow}

\usepackage[
backend=biber,
style=alphabetic,
sorting=ynt
]{biblatex}
\usepackage{algorithm}
\usepackage{algpseudocode}
\floatname{algorithm}{Algorithm}

\newcommand{\diag}{\mathsf{Diag}}

\newcommand{\ba}{\boldsymbol{\alpha}}
\newcommand{\bb}{\boldsymbol{\beta}}
\newcommand{\bg}{\boldsymbol{\gamma}}

\newcommand{\cW}{\mathcal{W}}

\newcommand{\R}{\mathbb{R}}

\newcommand{\Z}{\mathbb{Z}}
\newcommand{\E}{\mathbb{E}}

\newcommand{\norm}[2]{\left\lVert #1 \right\rVert_{#2}}
\newcommand{\abs}[1]{\left| #1 \right|}
\newcommand{\para}[1]{\left( #1 \right)}
\newcommand{\set}[1]{\left\{#1\right\}}
\newcommand{\ip}[1]{\left\langle{ #1 }\right\rangle}

\newcommand{\Var}{\mathsf{Var}}

\newcommand{\eps}{\varepsilon}

\newtheorem{theorem}{Theorem}[section]
\newtheorem*{theorem*}{Theorem}
\newtheorem{lemma}[theorem]{Lemma}
\newtheorem{proposition}[theorem]{Proposition}

\theoremstyle{definition}
\newtheorem{definition}[theorem]{Definition}

\theoremstyle{remark}
\newtheorem{remark}[theorem]{Remark}
\numberwithin{equation}{section}

\newif\ifnotanonymous
\notanonymoustrue

\begin{document}
\setcounter{page}{1}

\title{Improved 
Sparse Recovery
for Approximate Matrix Multiplication}

\ifnotanonymous
    \author{
    Yahel Uffenheimer \thanks{Hebrew University of Jerusalem. Supported by ERC Starting Grant (CODY 101039914).} 
    \and 
    Omri Weinstein \thanks{Hebrew University of Jerusalem. Supported by ISF grant \#3011005535 and ERC Starting Grant (CODY 101039914).}}
\else
    \author{Anonymous Authors}
\fi

\maketitle

\begin{abstract}
We present a simple randomized algorithm for approximate matrix multiplication (AMM) whose error scales with the \emph{output} norm $\|AB\|_F$. Given any $n\times n$ matrices $A,B$ and a runtime parameter $r\leq n$, the algorithm produces in $O(n^2(r+\log n))$ time,  a matrix 
$C$ with total squared error $\E[\|C-AB\|_F^2]\le (1-\frac{r}{n})\|AB\|_F^2$, per-entry variance $\|AB\|_F^2/n^2$ and bias $\E[C]=\frac{r}{n}AB$. Alternatively, the algorithm can compute an \textit{unbiased} estimation with expected total squared error $\frac{n}{r}\norm{AB}{F}^2$, recovering the state-of-art AMM error obtained by Pagh's TensorSketch algorithm \cite{pagh13}. Our algorithm is a log-factor faster.

The key insight in the algorithm is a  new variation of pseudo-random rotation of the input matrices (a Fast Hadamard Transform with asymmetric diagonal scaling), which redistributes the Frobenius norm of the \emph{output} $AB$ uniformly across its entries. 
\end{abstract} 

\section{Introduction}

Matrix multiplication is a fundamental operation across all fields of science and technology. Most notably, matrix multiplications are the backbone and computational bottleneck of training and inference of deep neural networks, where both forward and backpropagation rely on giant matrix multiplications (e.g., multiplying $16K\times 16K$ matrices is nowadays considered a prerequisite in any LLM \cite{Li24LLMs,overview_of_llms}). For trillion-scale parameter models, the difference between na\"ive matrix-multiplication ($\sim n^3$ time) versus the information-theoretic lower bound $\Omega(n^2)$, is a fundamental concern.

The asymptotics and hence impracticality of \emph{fast matrix multiplication}\footnote{Asserting that the product $AB$ of two $n\times n$ real matrices can be 
computed in $O(n^\omega) \sim O(n^{2.37})$ time \cite{S69,w12,l14, duan2023faster}.} (FMM) algorithms like Strassen\footnote{Here we refer to Strassen-like algorithms, using a recursive application of some basic bilinear algorithm. The constants in most of these algorithms are too large to be practical. Strassen's original algorithm is a unique outlier.} \cite{S69}, initiated a long line of research on \emph{approximate matrix multiplication} (AMM), which studies the best \emph{speed-accuracy} tradeoff achievable by ``combinatorial'' algorithms, that avoid divide-and-conquer and have \emph{non-asymptotic} sub-cubic runtime. 
More formally, for a prescribed parameter $r < n$, the  goal is to produce, in $\tilde{O}(rn^2)$ time, a matrix $C \in \R^{n\times n}$, which $\eps$-approximates $AB$ in the Frobenius norm, where $\eps$ is a decreasing function of $r$ (meaning that $\eps$ tends to $0$ as $r\to n$).

Essentially all known AMM algorithms use randomized \emph{sketching or sampling}  techniques \cite{sar06, drineas06, pagh13, magen10, cw13, CNW16, CL99}, and the state-of-art after more than 20 years of research is a \emph{linear} speed-accuracy tradeoff: 
\begin{align}\label{eq_AMM_sota}
    \|C - AB\|^2_F 
    \leq  
    O\left( \min\left\{ \frac{1}{r}\norm{A}{F}^2\cdot \norm{B}{F}^2 \; , \; \frac{n}{r}\|AB\|^2_F \right\} \right),  
    \tag{AMM error}
\end{align}
where $C$ must be produced in $\tilde{O}(rn^2)$ (randomized) time.  The first error term can be obtained via the standard ``sketch-and-solve'' algorithm (e.g., CountSketch \cite{countsketch, cw13}), while the second one is obtained by a clever \emph{output-sensitive} variation of CountSketch, using FFT (TensorSketch \cite{pagh13}). The two bounds in \eqref{eq_AMM_sota} 
are generally incomparable, but they coincide for the (hardest known) distribution of random Gaussian (or Rademacher) matrices.  

A conceptual limitation of all aforementioned AMM algorithms (except \cite{CL99} which applies only to nonnegative matrices), is that they use \emph{compression} techniques, i.e., compress each matrix (using low rank projections or subset sampling) and compute the product on the compressed representations. In data-driven applications, most notably LLM training and inference, this is a severe limitation since compression crucially decreases the number of trainable parameters (see \cite{STL25} and references therein).
Moreover, a recent result of \cite{OP25} proves that for compression-based algorithms, in a setup where Alice and Bob can send $O(rn)$ bits representing their respective input matrices $A,B \in \R^{n\times n}$ to a ``referee'' who must then compute the output based on their messages, the error bound in \eqref{eq_AMM_sota} is tight for  random Gaussian (and Rademacher) matrices.

The idea of using a fast orthogonal rotation matrix like the Walsh-Hadamard transformed, with randomized signs, was first proposed by \cite{fast-jl} when introducing the fast-JL transform. This technique has evolved into the sub-sampled randomized Hadamard transform (SRHT) and was adapted for randomized linear algebra. Specifically it has been used for low-rank approximation and matrix sketching \cite{tropp11, bout13}. We highlight \cite{bout13}, as they were the first to apply the technique to the AMM task. The main difference between SRHT and our approach is that we do not apply sub-sampling until after we rotate both matrices. Instead of sketching both matrices and computing the product of sketches, we transform them to a more convenient form and then sketch their product directly. Our transformation acts on the matrix space and not on the vector space. This difference allows a fast runtime without compressing the matrices (and also escapes the low-rank constraint of common sketching algorithms).


\section{Fast Walsh-Hadamard Sketch}
\subsection{Preliminaries}
Denote $[n]=\set{0,\ldots,n-1}$. The Frobenius norm of a square $n\times n$ matrix $A$ is defined as $\norm{A}{F}^2=\sum_{i,j\in [n]}\left|A_{i,j}\right|^2$. The $2^k$-th Walsh-Hadamard transform (WHT), denoted by $H_{2^k}$, is defined recursively: $$H_2=\frac{1}{\sqrt{2}}\cdot \begin{pmatrix}
    1 & 1\\ 1& -1
\end{pmatrix}\quad,\quad H_{2^{k+1}}=H_{2^k}\otimes H_{2^k}.$$More explicitly, for $i,j\in [2^k]$ it holds $$(H_{2^k})_{i,j}=\frac{1}{\sqrt{2^k}}(-1)^{\ip{i,j}_b}\quad \text{where}\quad \ip{i,j}_b=\bigoplus_{t=0}^{k-1}b(i)_t\oplus  b(j)_t,$$letting $b(i)$ denote the binary representation of $i$ and $\oplus$ denote binary addition (i.e., addition in $\mathbb{F}_2$). When $k$ is clear from context, we drop the subscript and write $H$. 
We note that $H$ has the following important properties -- it is a unitary symmetric involution, that is: $$H=H^{\top}\quad,\quad H^{-1}=H\quad,\quad \norm{Hx}{2}=\norm{x}{2}.$$We also note that $H_{2^k}$ is the Discrete Fourier transform of the group $\Z_2^k$. For a vector $a$ of size $n$, let $D_a=\diag(a)$. Note that for every couple of vectors $a,b$ of size $n$ and a matrix $X$ of size $n\times n$ it holds $$(ab^{\top})\odot X=D_aXD_b,$$using the vec-trick, where $\odot$ denotes the element-wise product of matrices.

\subsection{Sketch Idea}
Note that for random Rademacher matrices $A,B$, since they are highly ``balanced'' in the sense that mass (i.e., the magnitude of the entries) is uniformly distributed, one can approximate $AB$ by computing any fixed set of $rn$ output entries, and fill the rest with zeros. This is a biased estimator achieving $\frac{n-r}{n}\cdot \norm{AB}{F}^2$ squared Frobenius error in expectation (over the randomness of $A,B$).

For arbitrary fixed $A,B$, the matrices may be unbalanced (have most of the mass concentrated on a small number of entries), and thus following the same strategy may lead to a very large and uncontrolled error, as the choice of wrong entries may have dire effects. Our idea, inspired by the Fast-JL Transform \cite{fast-jl}, is to precondition the matrices in order to obtain new ``\textit{pseudo-random}'' matrices, for which the above na\"ive algorithm should work well. Intuitively, the uncertainty principle (for the Fourier transform, see \cite{wig20}) mitigates severe imbalances in the original matrices. For this to work, we design the preconditioning to be invertible.

We note that by the vec-trick, for a matrix $X$ of size $2^k\times 2^k$ it holds $$\mathrm{vec}( H_{2^k}XH_{2^k}) =H_{2^{k+1}} \mathrm{vec}({X}).$$Therefore we can view the preconditioning done in the algorithm (presented next), as a transformation in the matrix space, compared with standard sketching algorithms, where the sketch is applied on each column / row of the matrices separately (at least conceptually).

\subsection{The Algorithm}
Let $A,B$ be given $n\times n$ matrices where we assume $n=2^k$ for some $k$. Let $H=H_{2^k}$ denote the WHT. Let $r\le n$ be a given parameter.

\begin{definition}
    For $\ba,\bb\in \set{\pm 1}^n$ define $\cW_{\ba,\bb}$ to be an operator on $n\times n$ matrices defined by $\cW_{\ba,\bb}(A)=HD_{\ba}AD_{\bb}H$. It is an invertible unitary linear operator with inverse given by $\cW_{\ba,\bb}^{-1}(B)=D_{\ba}HBHD_{\bb}$.
\end{definition}

\begin{lemma}\label{lem:1}
    For any $\ba,\bb,\bg\in \set{\pm1}^n$ and matrices $A,B$, it holds $\cW_{\ba,\bb}(AB)=\cW_{\ba,\bg}(A)\cdot \cW_{\bg,\bb}(B)$ and $\cW_{\ba,\bb}^{-1}(AB)=\cW_{\ba,\bg}^{-1}(A)\cdot \cW_{\bg,\bb}^{-1}(B)$.
\end{lemma}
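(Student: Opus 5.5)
The plan is to verify both identities by direct expansion of the definitions, using only two facts already recorded in the preliminaries: $H$ is an involution ($H^2=I$), and for a sign vector $\bg\in\set{\pm1}^n$ the diagonal matrix $D_{\bg}$ satisfies $D_{\bg}^2=I$, since each diagonal entry squares to $1$.

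For the forward operator, I will expand the right-hand side as
\begin{align*}
\cW_{\ba,\bg}(A)\cdot\cW_{\bg,\bb}(B) = HD_{\ba}AD_{\bg}H\cdot HD_{\bg}BD_{\bb}H .
\end{align*}
The two inner factors of $H$ collapse to the identity, which leaves the middle product $D_{\bg}D_{\bg}=I$. What remains is $HD_{\ba}ABD_{\bb}H$, and this is exactly $\cW_{\ba,\bb}(AB)$.

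For the inverse operator, I will use the formula $\cW^{-1}_{\ba,\bb}(X)=D_{\ba}HXHD_{\bb}$ given in the definition. The expansion is
\begin{align*}
\cW_{\ba,\bg}^{-1}(A)\cdot\cW_{\bg,\bb}^{-1}(B)=D_{\ba}HAHD_{\bg}\cdot D_{\bg}HBHD_{\bb}.
\end{align*}
In the middle, $D_{\bg}D_{\bg}=I$ first, and then the adjacent $HH=I$. This leaves $D_{\ba}HABHD_{\bb}=\cW^{-1}_{\ba,\bb}(AB)$.

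If one wants to confirm the stated inverse formula, the same cancellations give it. We have $D_{\ba}H(HD_{\ba}XD_{\bb}H)HD_{\bb}=X$, and the composition in the other order also returns $X$.

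There is no real obstacle here. The only point needing care is the order of the factors: the shared index $\bg$ has to sit on the inner side of both factors so that the cancelling pairs end up adjacent. The definition's convention, with $\ba$ on the left and $\bb$ on the right, ensures exactly this.
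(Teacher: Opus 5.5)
Your proof is correct and is the same as the paper's: expand the product on the right and cancel the adjacent inner pairs $HH=I$ and $D_{\bg}D_{\bg}=I$, in the opposite order for the inverse identity. Your extra check of the stated inverse formula is a harmless addition that the paper takes for granted.
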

\begin{proof}
    Indeed, $\cW_{\ba,\bb}(AB)=HD_{\ba}ABD_{\bb}H=HD_{\ba}AD_{\bg}HHD_{\bg}BD_{\bb}H=\cW_{\ba,\bg}(A)\cdot \cW_{\bg,\bb}(B)$ using the fact $D_{\bg}^2=I$ and $H^2=I$. Similarly for the inverse.
\end{proof}

\begin{algorithm}
    \caption{Approximate Matrix Multiplication}
    \begin{algorithmic}[1]
        \Require{$n\times n$ matrices $A,B$.}
        \Ensure{an approximation $C$ for $AB$.}

        \State{Draw random sign vectors $\ba,\bb,\bg\in \set{\pm1}^n$.}\label{item:1}
        \State{Compute $A'=\cW_{\ba,\bg}(A),B'=\cW_{\bg,\bb}(B)$.}\label{item:2}
        \State{Choose any set of $n\cdot r$ indices $(i,j)\in [n]^2$ and compute $A_{i,:}'\cdot B_{:,j}'$. Store the results in a matrix $C'$ initialized to zero.}\label{item:3}
        \State{\Return $\cW_{\ba,\bb}^{-1}(C')$.}\label{item:4}
    \end{algorithmic}
\end{algorithm}

\textbf{Runtime.} Drawing random sign vectors requires $O(n)$ time. To compute $A',B'$, note that $\mathrm{vec}(HXH)=(H\otimes H)\mathrm{vec}(X)$ can be computed using the fast WHT algorithm in time $O(n^2\log n)$, while a product with a diagonal matrix can be computed in time $O(n^2)$. Hence step \ref{item:2} takes $O(n^2\log n)$ time. For step \ref{item:3}, a trivial calculation takes $O(rn\cdot n)$ time. At last, step \ref{item:4} is the same as step \ref{item:2}. We conclude that $C$ can be computed in time $O(n^2(r+\log n))$.

\begin{remark}
    A few remarks are in place:
\begin{itemize}
    \item Compared to previously known algorithms, like that of \cite{pagh13}, the runtime is a logarithmic factor faster.
    \item The specifics of step \ref{item:3} are left unspecified on purpose. In the analysis to follow we will assume a \textit{random} subset of size $rn$ is chosen, but different methods are possible, maybe with some improvements to the analysis. Since the error analysis (shown next) is only concerned with the marginal distributions, it is blind to other effects of the implementation of step \ref{item:3}.
    \item The vector $\bg$ is canceled out and so has no algorithmic effect (and can be removed). However, when analyzing the properties of $A',B'$ on their own, the presence of $\bg$ does have significance.
    \item Setting $r=n$ we have $C'=A'B'$, and so by definition of $C$ and \autoref{lem:1}, $C=AB$. This is unusual in sketching algorithms, which usually cannot recover the result exactly due to likely hashing collisions.
    \item If the $n\cdot r$ positions are chosen to be specific rows or columns of $A'B'$ (say the first $r$ rows), then the rank of $C'$ is at most $r$. Since $\cW_{\ba,\bb}$ is linear and invertible, the resulting rank of $C$ is $r$ too. However, if the positions are chosen in an unstructured way (say, a random set of size $rn$), we don't have such a guarantee. This doesn't change the analysis, but it does mean the error isn't inherently lower bounded by the sum of the lower $n-r$ singular values of $AB$.
\end{itemize}
\end{remark}

\subsection{Error Analysis}
\begin{proposition}
    Assuming the calculated positions are randomly chosen (uniformly), it holds $\E[C]=\frac{r}{n}AB$.
\end{proposition}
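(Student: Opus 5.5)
The plan is to write $C'$ as a masked version of $A'B'$ and use linearity of expectation. Let $S\subseteq[n]^2$ be the random set of $rn$ positions, and $M_S$ its $0/1$ indicator matrix. Then $C' = M_S\odot (A'B')$. By \autoref{lem:1}, $A'B' = \cW_{\ba,\bg}(A)\cdot\cW_{\bg,\bb}(B) = \cW_{\ba,\bb}(AB)$, so
\[
C' = M_S\odot \cW_{\ba,\bb}(AB) \qquad\text{and}\qquad C=\cW_{\ba,\bb}^{-1}\bigl(M_S\odot\cW_{\ba,\bb}(AB)\bigr).
\]
Note that $C'$ no longer depends on $\bg$. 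The only remaining randomness is $S$, together with $\ba,\bb$.

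The key step is to condition on $\ba,\bb$ and average over $S$ first. Since $S$ is a uniformly random subset of size $rn$ among $n^2$ positions, each position $(i,j)$ lies in $S$ with probability $rn/n^2=r/n$. Hence $\E_S[M_S]=\frac{r}{n}J$, where $J$ is the all-ones matrix. The map $M\mapsto \cW_{\ba,\bb}^{-1}(M\odot X)$ is linear in $M$ for fixed $X$, so
\[
\E_S[C\mid \ba,\bb] = \cW_{\ba,\bb}^{-1}\Bigl(\tfrac{r}{n}J\odot \cW_{\ba,\bb}(AB)\Bigr)=\tfrac{r}{n}\,\cW_{\ba,\bb}^{-1}\bigl(\cW_{\ba,\bb}(AB)\bigr)=\tfrac{r}{n}AB.
\]
The last equality uses the invertibility of $\cW_{\ba,\bb}$ from its definition. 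This conditional expectation does not depend on $\ba,\bb$, so the tower property gives $\E[C]=\frac{r}{n}AB$.

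There is no real obstacle here. The only points needing care are the following.
\begin{itemize}
\item Use independence of $S$ from $(\ba,\bb,\bg)$, or at least that $S$ is uniform conditional on them.
\item Note that the entrywise marginal $\Pr[(i,j)\in S]=r/n$ is all that is needed; no independence between entries is required.
\end{itemize}
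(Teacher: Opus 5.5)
Your proof is correct, but it takes a more direct route than the paper's.

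\textbf{The paper's route.} The paper works entrywise. It writes $C_{i,j}=\ba_i\bb_j\sum_{k,\ell}H_{i,k}C'_{k,\ell}H_{\ell,j}$ and replaces $C'_{k,\ell}$ by $\frac{r}{n}(A'B')_{k,\ell}$ using the sampling. It then expands $(A'B')_{k,\ell}$ via \autoref{lem:1} as $\sum_{p,q}H_{k,p}H_{q,\ell}\,\ba_p\bb_q(AB)_{p,q}$ and averages over the signs. The identity $\E[\ba_i\ba_p\bb_j\bb_q]=\delta_{ip}\delta_{jq}$ removes the cross terms, and $H_{i,k}^2=1/n$ makes the remaining double sum over $(k,\ell)$ equal to $1$.

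\textbf{Your route.} You condition on the signs and use only $\E_S[M_S\odot X]=\frac{r}{n}X$, linearity, and invertibility of $\cW_{\ba,\bb}$. This gives $\E[C\mid \ba,\bb]=\frac{r}{n}AB$ for every fixed $\ba,\bb$. That is a stronger and more transparent statement than the paper's. The mean depends only on the sampling marginals $\Pr[(i,j)\in S]=r/n$ and on the invertibility of the map taking $AB$ to $A'B'$. The sign randomness and the flatness of $H$ play no role: the identity map gives the same mean, which matches the paper's later remark about naive coordinate sampling.

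\textbf{What the paper's computation buys.} Its coordinate expansion is unnecessary for this proposition. However, it mirrors the second-moment calculation in the next proposition. There, averaging over $\ba,\bb$ and using $|H_{i,k}|^2=1/n$ are genuinely needed to spread the mass evenly as $\E[(A'B')_{i,j}^2]=\norm{AB}{F}^2/n^2$ per entry.
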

\begin{proof}
    By assumption, $\E[C^{\prime}_{i,j}]=\frac{r}{n}\cdot (A'B')_{i,j}$ where the probability is taken over the choice of indices. Therefore, taking the expectation over both indices and signs, 
    \begin{align*}
        \E[C_{i,j}]&=\E[(D_{\ba}HC'HD_{\bb})_{i,j}]
         =\E\left[\ba_{i}\bb_{j}\cdot \sum_{k,\ell}H_{i,k}C^{\prime}_{k,\ell}H_{\ell,j}\right] 
        = \sum_{k,\ell}H_{i,k}H_{\ell,j}\cdot \E[\ba_i\bb_j\cdot C^{\prime}_{k,\ell}].
    \end{align*}
    Using conditioned expectation and \autoref{lem:1}, $$\E[\ba_i\bb_jC^{\prime}_{k,\ell}]=\frac{r}{n}\cdot \E[\ba_{i}\bb_j(A'B')_{k,\ell}]=\frac{r}{n}\sum_{p,q}H_{k,p}H_{q,\ell}\cdot (AB)_{p,q}\cdot \E[\ba_i\bb_j\cdot \ba_{p}\bb_q].$$
    Note that $\E[\ba_i\bb_j\cdot \ba_{p}\bb_q]=0$ unless $i=p,j=q$ in which case it is equal to $1$. Therefore $\E[\ba_i\bb_jC^{\prime}_{k,\ell}]=\frac{r}{n}H_{k,i}H_{j,\ell}(AB)_{i,j}$, implying that $$\E[C_{i,j}]=\sum_{k,\ell} H_{i,k}H_{\ell,j}H_{k,i}H_{j,\ell}\cdot \frac{r}{n}\cdot (AB)_{i,j}.$$We finish by recalling that $H$ is symmetric with elements of magnitude $1/n$, thus $$\E[C_{i,j}]=\frac{n^2}{n^2}\cdot \frac{r}{n}\cdot (AB)_{i,j}=\frac{r}{n}(AB)_{i,j}.$$
\end{proof}

\begin{proposition}
    It holds $\Var(\norm{C-AB}{F})\le \E[\norm{C-AB}{F}^2]=\frac{n-r}{n}\cdot \norm{AB}{F}^2.$
\end{proposition}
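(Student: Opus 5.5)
The plan is to work entirely in the rotated domain, where the sampling is transparent, and transfer the error back using unitarity of $\cW_{\ba,\bb}$. The inequality $\Var(\norm{C-AB}{F})\le \E[\norm{C-AB}{F}^2]$ needs no work: it is $\Var(X)=\E[X^2]-(\E X)^2\le \E[X^2]$ applied to $X=\norm{C-AB}{F}$. The content of the statement is therefore the identity for $\E[\norm{C-AB}{F}^2]$.

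\textbf{Step 1: reduce to the rotated domain.} By \autoref{lem:1}, $A'B'=\cW_{\ba,\gamma}(A)\cW_{\gamma,\bb}(B)=\cW_{\ba,\bb}(AB)$. Hence $C-AB=\cW_{\ba,\bb}^{-1}(C'-A'B')$. Since $\cW^{-1}_{\ba,\bb}$ is multiplication on both sides by orthogonal matrices ($D_{\ba}H$ and $HD_{\bb}$), it preserves the Frobenius norm, and so $\norm{C-AB}{F}=\norm{C'-A'B'}{F}$ for every fixed choice of signs and indices.

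\textbf{Step 2: the rotated error.} Let $S$ be the random set of $rn$ positions. Then $C'-A'B'$ equals $-(A'B')_{k,\ell}$ on positions outside $S$ and $0$ on $S$. This gives
\[
\norm{C'-A'B'}{F}^2=\sum_{(k,\ell)\notin S}(A'B')_{k,\ell}^2 .
\]
Condition on the signs and take the expectation over $S$. Each position is excluded with probability $1-\frac{rn}{n^2}=\frac{n-r}{n}$, so by linearity
\[
\E\big[\norm{C'-A'B'}{F}^2\,\big|\,\ba,\bb,\gamma\big]=\frac{n-r}{n}\norm{A'B'}{F}^2 .
\]
Again by unitarity, $\norm{A'B'}{F}=\norm{\cW_{\ba,\bb}(AB)}{F}=\norm{AB}{F}$, which is deterministic. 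Taking the outer expectation over the signs then gives exactly $\frac{n-r}{n}\norm{AB}{F}^2$.

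The only points needing care are the use of uniform marginals of $S$ and unitarity. There is no real obstacle. In particular, the sign randomness plays no role in the total error; it matters only for the per-entry variance claim, which is not part of this statement.
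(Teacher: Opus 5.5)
Your proof is correct under the paper's standing assumption that the $rn$ positions are drawn uniformly at random and independently of the signs. However, it uses a different source of randomness than the paper does. You condition on the signs and average over the index set $S$. The only ingredients are then the uniform marginals $\Pr[(k,\ell)\notin S]=\frac{n-r}{n}$ and the unitarity identity $\norm{A'B'}{F}=\norm{AB}{F}$, and the rotation itself plays no role.

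The paper does the opposite. It first uses the independence of $\ba,\bb$, via $\E[\ba_k\ba_p\bb_\ell\bb_q]=\delta_{kp}\delta_{\ell q}$ and $|H_{i,k}|=1/\sqrt{n}$, to show that every single entry satisfies $\E[(A'B')_{i,j}^2]=\norm{AB}{F}^2/n^2$. It then conditions on an arbitrary fixed index set $F$ of size $rn$ and sums over the $n^2-rn$ entries outside $F$.

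Your route is shorter and avoids any moment computation. But it is in effect the analysis of the trivial ``sample $rn$ entries of $AB$'' algorithm, which the paper later remarks has the same global guarantee. So it establishes nothing that depends on the preconditioning.

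The paper's route gives a strictly stronger conclusion: the identity holds conditionally on every fixed $F$, so step 3 may be deterministic. Examples are the first $r$ rows, which make $C$ have rank at most $r$, or the fixed block of rows used in the unbiased-estimator remark. In addition, the per-entry identity it proves is exactly what underlies the per-entry variance claims that separate the algorithm from naive sampling.
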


\begin{proof}
    Since $\cW_{\ba,\bb}$ is unitary, we have
    \begin{align}\label{eq:1}
    \norm{C-AB}{F}^2 =\norm{\cW_{\ba,\bb}^{-1}(C'-A'B')}{F}^2=\norm{C'-A'B'}{F}^2.
    \end{align}
    For any $i,j\in [n]$, by \autoref{lem:1} we have $$(A'B')_{i,j}=\sum_{k,\ell} H_{i,k}H_{\ell,j}\cdot \ba_{k}\bb_{\ell}\cdot (AB)_{k,\ell}.$$Therefore, denoting $R_{k,\ell}=H_{i,k}H_{\ell,j}\cdot (AB)_{k,\ell}$ we have: 
    $$\E[(A'B')_{i,j}^2]=\sum_{k,l,p,q}R_{k,l}R_{p,q}\cdot \E\left[\ba_{k}\ba_{p}\cdot \bb_{\ell}\bb_{q}\right].$$The expectation over the signs is $0$ unless $k=p$ and $\ell=q$, in which case it is $1$. Thus the sum collapses,
    $$\E[(A'B')_{i,j}^2]=\sum_{k,\ell} R_{k,\ell}^2 =\frac{1}{n^2}\sum_{k,\ell}(AB)_{k,\ell}^2=\frac{\norm{AB}{F}^2}{n^2}.$$
    Conditioned on any fixed set of indices $F\subset [n]^2$ of size $rn$, 
    \begin{align}\label{eq:2}
        \E[\norm{C'-A'B'}{F}^2\mid F\text{ was chosen}]=\sum_{(i,j)\notin F}\E[(A'B')_{i,j}^2]= (n^2-rn)\cdot \frac{\norm{AB}{F}^2}{n^2}.
    \end{align}
    Returning to \autoref{eq:1}, we conclude $$\E[\norm{C-AB}{F}^2]=\E[\norm{C'-A'B'}{F}^2]=\frac{n-r}{n}\cdot \norm{AB}{F}^2.$$
\end{proof}

\begin{remark}
    We can output an unbiased estimator by scaling step \ref{item:4} to be $C=\frac{n}{r}\cdot \cW_{\ba,\bb}^{-1}(C')$. In this case, if the index $(i,j)$ was chosen to be computed in step \ref{item:3}, then $C_{i,j}^{\prime}=\frac{n}{r}\cdot (A'B')_{i,j}$, which implies $$\abs{C_{i,j}^{\prime}- (A'B')_{i,j}}^2=\abs{\para{\frac{n}{r}-1}\cdot (A'B')_{i,j}}^2.$$
    Noting that $(n/r-1)^2=\frac{n^2}{r^2}-2\frac{n}{r}+1$, \autoref{eq:2} becomes
    \begin{align*}\label{eq:3}
    \E[\norm{C'-A'B'}{F}^2] &=\sum_{i=0}^{n-r-1}\sum_{j=0}^{n-1}\E[(A'B')_{i,j}^2]+\sum_{i=n-r}^{n-1}\sum_{j=0}^{n-1}\para{\frac{n}{r}-1}^2\cdot \E\left[(A'B')_{i,j}^2\right]
    \\ &=\frac{1}{n^2}\para{n(n-r)+nr\cdot \para{\frac{n^2}{r^2}-2\frac{n}{r}+1}}\cdot \norm{AB}{F}^2
    \\ &=\para{1-\frac{r}{n}+\frac{n}{r}-2+\frac{r}{n}}\cdot \norm{AB}{F}^2\le \frac{n}{r}\cdot \norm{AB}{F}^2.
    \end{align*}
    This is exactly the variance bound achieved by \cite{pagh13}, which is the state-of-the-art for unbiased estimators. Thus we recover the same result, with different tools.
\end{remark}

\begin{remark}
    Note that the trivial algorithm of randomly choosing $rn$ coordinates and computing their values, while setting all other values to null, achieves the same global guarantee, meaning $\E[C]=\frac{r}{n}AB$ and $\E[\norm{C-AB}{F}^2]=\frac{n-r}{n}\cdot \norm{AB}{F}^2$. However, the main difference from our approach, or Pagh's approach \cite{pagh13}, is the per-entry variance. In this na\"ive algorithm, we have $$\E[(C_{ij}-(AB)_{ij})^2]=\frac{n-r}{n}\cdot (AB)_{ij}^2,$$so the variance is related to the magnitude of the $i,j$-th output value. In our algorithm, the per-entry variance is $\frac{n-r}{n}\cdot \frac{\norm{AB}{F}^2}{n^2}$, i.e., the average squared value of the output matrix. If we multiply by $\frac{n}{r}$ to obtain unbiased estimators, the per-entry variance becomes $\le \frac{n}{r}\cdot (AB)_{ij}^2$ for the na\"ive algorithm and $\le \frac{n}{r}\cdot \frac{\norm{AB}{F}^2}{n^2}=\frac{\norm{AB}{F}^2}{nr}$ for our algorithm. This is the same as the per-entry variance of Pagh's algorithm. This is useful for sparse applications, as demonstrated in Pagh's original paper.
\end{remark}

\section{Discussion}
This simplification for a known result was recovered when trying to formalize and prove that random Gaussian matrices (or Rademacher matrices), are the hardest case for approximate matrix multiplication algorithms. We hoped to reduce the worst-case to the average case, thus allowing one to obtain algorithms for any (real) matrices from an algorithm that works well (on average) for random matrices (Gaussians or Rademacher).

In the finite field case, a recent work of \cite{zmir25} shows such a reduction, using cryptographic assumptions. However, these do not extend to the real case. The main failure is the fact real distributions cannot be shift invariant. Unfortunately, the random rotations we present here are not sufficient for such a reduction. First, it is unclear how to define in-distinguishability from a matrix distribution in this context. Second, the rotation doesn't change the spectral properties of the matrices (since it is unitary), which is in our view a prerequisite for any meaningful reduction (and a distinguishing feature).

Another feature of our algorithm is that it can be amplified to an exact algorithm. In other words, given black box access to the algorithm, running $O(\frac{n}{r}\ln(1/\eps))$ independent iterations suffice to obtain a matrix $C$ which satisfies $\norm{C-AB}{F}^2\le \eps$ assuming $\norm{AB}{F}=1$. While it has no practical use, since one can just compute the exact product in time $O(n^3)$, it is a unique feature of this algorithm. As mentioned above, most sketching algorithms inherently corrupt the data, while our algorithm intuitively ``peels off'' a uniform layer. It is not even clear if any algorithm achieving the same guarantees can be amplified to an exact algorithm. It is interesting to understand what type of constraints do these guarantees impose on a (randomized) bilinear algorithm.

We finish the discussion by pointing out that more improvements might be made to this algorithm. In particular, switching out step \ref{item:3} with a smarter algorithm might give better results. Any algorithm that works well for Gaussian or Rademacher matrices and relies only on first and second moment properties might extend this result.

\ifnotanonymous
\fi

\printbibliography
\end{document}